\newtheorem{theorem}{Theorem}
\newtheorem{lemma}[theorem]{Lemma}
\newtheorem*{theorem*}{Theorem}
\theoremstyle{remark}
\newcommand{\bigO}{\mathcal{O}}
\newcommand{\polylog}{\mathrm{polylog}}
\newcommand{\Prob}[2]{\mathbb{P}_{#1}\left(#2\right)}
\newcommand{\Expec}[2]{\mathbb{E}_{#1}\left[#2\right]}
\newcommand{\hit}{\mathrm{\tau}} 	%% Hitting time
\newcommand{\p}{p}					%% Transition probabilities
\newcommand{\q}{q}					%% Transition probabilities
\newcommand{\pr}{r}					%% Probability of remaining in current state
\newcommand{\uf}{g}					%% Update function
\renewcommand{\leq}{\leqslant}
\renewcommand{\le}{\leqslant}
\renewcommand{\geq}{\geqslant}
\renewcommand{\geq}{\geqslant}
\renewcommand{\ge}{\geqslant}
\newcommand{\andy}[1]{\todo[inline,color=green]{#1}}
\newcommand{\lb}[1]{\todo[inline,color=yellow]{#1}}
\definecolor{myBlue}{rgb}{0.7, 0.85, 1.}
\definecolor{myRed}{rgb}{1., 0.47, 0.37}
\newcommand{\amos}[1]{\todo[inline,color=myRed]{#1}}
\newcommand{\cnt}{{\tt previousSample}}
\newcommand{\clk}{{\tt countdown}}
\title{On the Role of Memory in Robust Opinion Dynamics}
\author{Luca Becchetti, Andrea Clementi, Amos Korman, Francesco Pasquale, Luca Trevisan and Robin Vacus}
\date{December 2022}
\begin{document}

\maketitle

\begin{abstract}
We investigate opinion dynamics in a fully-connected system, consisting of $n$ identical and anonymous agents, where one of the opinions (which is called {\em correct}) represents a piece of information to disseminate. 
In more detail, one \emph{source} agent initially holds the correct opinion and remains with this opinion throughout the execution. The goal for non-source agents is to quickly agree on this correct opinion, and do that robustly, i.e., from any initial configuration. The system evolves in rounds. In each round, one agent chosen uniformly at random is {\em activated}: unless it is the source, the agent pulls the opinions of $\ell$ random agents and then updates its opinion according to some rule. 
We consider a restricted setting, in which agents have no memory and they only revise their opinions on the basis of those of the agents they currently sample. 
%Motivated by biological scenarios, previous works on the problem focused on minimizing the communication capacity
As restricted as it is, this setting encompasses very popular opinion dynamics, such as the \emph{voter model} and \emph{best-of-$k$ majority} rules. 

Qualitatively speaking, we show that lack of memory prevents efficient 
convergence. Specifically, we prove that no dynamics can achieve 
correct convergence in an expected number of steps that is 
sub-quadratic in $n$, even under a strong version of the model in which 
activated agents have complete access to the current configuration of the entire 
system, i.e., the case $\ell=n$. Conversely, we prove that the simple 
voter model (in which $\ell=1$) correctly solves the problem, while almost matching the aforementioned lower bound. 
%At the same time we show that the simple voter model almost achieves the above bound asymptotically, whereas more sophisticated heuristics, such as best-of-$k$ majority dynamics, do not even ensure convergence to the correct opinion. 
%Finally, experimental evidence strongly suggests that dynamics making use of a modest amount of memory can achieve consensus in an almost linear number of steps, corresponding to an exponential gap in the average number of activations per agent. 

These results suggest that, in contrast to symmetric consensus problems (that do not involve a notion of correct opinion), fast convergence on the correct opinion using stochastic opinion dynamics may indeed require the use of memory. This insight may reflect on natural information dissemination processes that rely on a few knowledgeable individuals.

\end{abstract}

\section{Introduction} \label{sec:intro}
\iffalse
Here put a nice Intro please!

Why convergence time in MAS is important:

Typical questions studied are the convergence properties of the
opinion dynamics: Is convergence to stable states guaranteed and
if yes, what are upper and lower bounds on the convergence time?
Guaranteed convergence is essential since otherwise the predictive
power of the model is severely limited. Moreover, studying the
convergence time of opinion dynamics is crucially important. In
general, the analysis of stable states is significantly more meaningful if these states are likely to be reached in a reasonable amount of
time, i.e., if quick convergence towards such states is guaranteed.
If systems do not stabilize in a reasonable time, stable states lack justification as a prediction of the system’s behavior.

%COPIED BY AAMAS PAPER
\fi

Identifying the specific algorithm employed by a biological system is extremely challenging. This quest combines empirical evidence, informed guesses, computer simulations, analyses, predictions, and verifications. One of the main difficulties when aiming to pinpoint an algorithm stems from the huge variety of possible algorithms. This is particularly true when multi-agent systems are concerned, which is the case in many biological contexts, and in particular in collective behavior \cite{sumpter2010collective,feinerman2017individual}. To reduce the space of algorithms, the scientific community often restricts attention to simple algorithms, while implicitly assuming that despite the fact that real algorithms may not necessarily be simple to describe, they could still be approximated by simple rules \cite{couzin2005effective,gelblum2015ant,fonio2016locally}. 
%buhl2006disorder
However, even though this restriction reduces the space of algorithms significantly, the number of simple algorithms still remains extremely large. 

Another direction to reduce the parameter space is to identify classes of algorithms that are less likely to be employed in a natural scenario, for example, because they are unable to efficiently handle the challenges induced by this scenario \cite{boczkowski2018limits,feinerman2017ants,bialek2012statistical}. Analyzing the limits of computation under different classes of algorithms and settings has been a main focus in the discipline of theoretical computer science. 
Hence, following the framework of understanding science through the \emph{computational lens} \cite{karp2011understanding}, it appears promising to employ lower-bound techniques from computer science to biologically inspired scenarios, in order to understand which algorithms are less likely to be used, or alternatively, which parameters of the setting are essential for efficient computation \cite{guinard2021intermittent,boczkowski2018limits}. This lower-bound approach may help identify and characterize phenomena that might be harder to uncover using more traditional approaches, e.g., using
%more complex (and often analytically intractable) models or 
simulation-based approaches or even  differential equations techniques. The downside of this approach is that it is limited to analytically tractable settings, which may be too ``clean'' to perfectly capture a realistic setting.

Taking a step in the aforementioned direction, we focus on a basic problem of information dissemination, in which few  individuals have pertinent information about the environment, and other agents wish to learn this information while using constrained and random communication  \cite{aspnes2009introduction,boczkowski2019minimizing,bastide2021self,DBLP:conf/podc/KormanV22}. 
Such information may include, for example, knowledge about a preferred migration route \cite{franks2002information,lindauer1957communication}, the location of a food source \cite{couzin2011uninformed}, or the need to recruit agents for a particular task \cite{razin2013desert}. 
In some species, specific signals are used to broadcast such information, a remarkable example being the waggle-dance of honeybees that facilitates the recruitment of hive members to visit food sources \cite{franks2002information,seeley2003consensus}. 
%Disseminating information efficiently is a key primitive in many natural or artificial systems consisting of agents with limited capabilities that interact in a stochastic fashion to collectively perform some task \cite{aspnes2009introduction,boczkowski2018limits,razin2013desert,couzin2005effective,angluin2008simple}.
In many other biological systems, however, it may be difficult for individuals to distinguish those who have pertinent information from others in the group \cite{couzin2005effective,razin2013desert}. Moreover,
 in multiple biological contexts, animals cannot rely on distinct signals and must obtain information by merely observing the behavior characteristics of other animals (e.g., their position in space, speed, etc.). This weak form of communication, often referred to as {\em passive communication} \cite{wilkinson1992information}, does not even require animals to deliberately send communication signals \cite{cvikel2015bats,giraldeau2018social}. A key theoretical question is 
identifying minimal computational resources that are 
necessary for information to be disseminated efficiently using passive communication.

Here, following the work in \cite{DBLP:conf/podc/KormanV22}, we consider an idealized model, that is inspired by the following scenario. \\

\noindent{\em Animals by the pond.} Imagine an ensemble of $n$ animals gather around a pond to drink water from it. Assume that one side of the pond, either the northern or the southern side, is preferable (e.g., because the risk of having predators there is reduced). However, the preferable side is known to a few animals only. These informed animals
will therefore remain on the preferable side of the pond. The rest of the group members  would like to learn which side of the pond is preferable, but they are unable to identify which animals are knowledgeable. What they are able to do instead, is to scan the pond and estimate the number of animals on each side of it, and then, according to some rule, move from side to side. Roughly speaking, the main result in \cite{DBLP:conf/podc/KormanV22} is that there exists a rule that allows all animals to converge on the preferable side relatively quickly, despite initially being spread arbitrarily in the pond. The suggested rule essentially says that each agent compares its current sample of the number of agents on each side with the sample obtained in the previous round; If the agent sees that more animals are on the northern (respectively, southern)  side now than they were in the previous sample, then it moves to the northern (respectively, southern) side. 

Within the framework described above, we ask whether knowing anything about the previous samples is really necessary, or whether fast convergence can occur by considering the current sample alone. Roughly speaking, we show that indeed it is not possible to converge fast on the correct opinion without remembering information from previous samples. Next, we describe the model and results in a more formal manner.

%We show that lack of memory has a severe impact on how quickly a system of anonymous and identical agents can converge to the correct opinion. 
\paragraph{Problem definition.}
We consider $n$ agents, each of which holds an {\em opinion} in $\{0,1,\ldots,k\}$, for some fixed integer $k$. One of these opinions is called {\em correct}. One \emph{source} agent\footnote{All results we present 
seamlessly extend (up to constants) to a constant number of source 
agents.} knows which opinion is correct, and hence holds this opinion throughout the execution. The goal of non-source agents is to converge on the correct opinion as fast as possible, from any initial configuration. Specifically, the process proceeds in discrete {\em rounds}. In each round, one
%non-source
agent is sampled uniformly at random (u.a.r) to be {\em activated}.
%initially holds a \emph{correct} opinion, representing the piece of information that is to be disseminated. In 
%the remainder, we use labels $1,\ldots , k$ for the opinions (provided 
%we initially have $k$ different opinions) and we 
%assume that $1$ is the {\em correct} opinion. 
%Starting from an initial configuration in which all agents but the source are in any arbitrary   configuration, the system evolves in rounds. In each round, one agent selected uniformly at random is {\em activated}. 
The activated agent is then given access to the opinions of $\ell$ agents, sampled u.a.r (with replacement\footnote{All the results directly hold also if the sampling is without replacement.}) from the multiset of all the opinions in the population (including the source agent, and the sampling agent itself), for some prescribed integer $\ell$ called {\em sample size}.
If it is not a source,
the agent then revises its current opinion using a decision rule, which defines the {\em dynamics}, and  which  is used by all non-source agents. We restrict attention to dynamics  that are not allowed to switch to opinions that are not contained in the samples they see.
A dynamics is called {\em memoryless} if the corresponding decision rule only depends  on the opinions contained in the current sample and on the opinion of the agent taking the decision. Note that the classical \emph{voter model} and \emph{majority} dynamics are memoryless.

\subsection{Our results}
In Section \ref{sec:lower}, we prove  that every memoryless dynamics 
must have expected running time $\Omega(n^2)$ for every constant number 
of opinions. A bit surprisingly, our analysis holds even under a stronger model in which, in 
every round, the activated agent has access to the current opinions of 
\emph{all} agents in the system.
%, i.e, its sample size is $\ell=n$. 

For comparison, in \emph{symmetric consensus}\footnote{In the remainder, by 
\emph{symmetric consensus} we mean the standard setting in which agents are 
required to eventually achieve consensus on \emph{any} of the opinions that are initially 
present in the system.} 
%convergence to \emph{any} of the opinions that are present in the initial 
%configuration 
convergence is achieved in $\bigO(n\log n)$ rounds with high 
probability, for a large class of majority-like dynamics and using 
samples of constant size \cite{schoenebeck2018consensus}. We thus have an exponential gap 
between the two settings, in terms of the average number of activations 
per agent.\footnote{This measure is often 
referred to as the \emph{parallel time} in distributed computing 
literature \cite{10.1016/j.ipl.2022.106314}.}
%We thus have an exponential gap 
%in terms of the average number of activations per agent\footnote{This measure is often 
%referred to as the \emph{parallel time} in distributed computing 
%literature \cite{10.1016/j.ipl.2022.106314}.} that are necessary to achieve consensus 
%(to the correct opinion in our case, to any opinion in the standard setting). 

We further show that our lower bound is essentially tight. Interestingly, %while majority-like dynamics might not even 
%afford convergence to the right opinion, 
we prove that the standard voter model 
achieves almost optimal performance, despite using samples of size $\ell=1$. Specifically,
 in Section \ref{sec:upper}, we 
 prove that the voter model converges to the correct opinion within 
$\bigO(n^2\log n)$ rounds in expectation and $\bigO(n^2\log^2n)$ rounds with high probability.
%Since the voter model corresponds to $\ell = 1$, 
This result and the lower bound of Section \ref{sec:lower} together
suggest that sample size cannot be a key ingredient in 
achieving fast consensus to the correct opinion after all.

Finally, we argue that allowing agents to use a relatively small amount 
of memory can drastically decrease convergence time. As mentioned 
earlier in the introduction, this result has been formally proved    in~\cite{DBLP:conf/podc/KormanV22} in the \textit{parallel} setting, 
 where at every round, all agents are activated simultaneously. We 
 devise a suitable adaptation of the algorithm proposed in~\cite{DBLP:conf/podc/KormanV22} to work in the sequential, 
 random activation model that is considered in this paper. This  adaptation   uses  samples of size $\ell=\Theta(\log n)$ and $\Theta(\log 
 \log n)$ bits of local memory. Empirical evidence discussed in Section 
 \ref{sec:simulations_short} 
suggests that its convergence time might be compatible with $n \log^{O(1)} n$. In terms of parallel time, this would imply  an exponential gap between this case and the memoryless case.

\subsection{Previous work} \label{sec:related}
The problem we consider spans a number of areas of potential interest. 
Disseminating 
information from a small subset of agents to the larger population is a 
key primitive in many biological, social or artificial systems.
Not surprisingly, dynamics/protocols taking up this challenge have been investigated for 
a long time across several communities, often using different 
nomenclatures, so that terms such as ``epidemics'' 
\cite{demers1987epidemic}, ``rumor spreading'' 
\cite{karp2000randomized}
%or ``gossip'' \cite{kempe2003gossip}
may  refer to the same or similar problems depending on context.

The corresponding literature is vast and providing an exhaustive review is 
infeasible here. In the following paragraphs, we discuss previous 
contributions that most closely relate to the present work.

%\lb{The following paragraphs highlight possible areas of related work. 
%Please change and/or reorganize as you deem appropriate.}

\paragraph{Information dissemination in MAS with limited communication.}
Dissemination is especially difficult when communication is limited 
and/or when the environment is noisy or unpredictable.
For this reason, a line of recent work in distributed computing focuses 
on designing robust protocols, which are tolerant to faults and/or 
require minimal assumptions on the communication patterns. 
An effective theoretical framework to address these challenges is that 
of self-stabilization, in which problems related to the 
scenario we consider have been investigated, such as self-stabilizing clock synchronization or 
majority computations  \cite{aspnes2009introduction,ben2008fast,boczkowski2019minimizing}.

In general however, these models make few assumptions about memory 
and/or communication capabilities and they rarely fit the framework of passive communication.
Extending the self-stabilization framework to  
multi-agent systems arising from biological distributed systems 
\cite{giraldeau2018social,angluin2004computation} has been the focus of 
recent work, with interesting preliminary results  
\cite{DBLP:conf/podc/KormanV22} discussed earlier in the introduction.

\paragraph{Opinion dynamics.}
%\lb{This paragraph should probably be relatively short, mostly stating 
%that the topic is vast and pointing to surveys or key references on the 
%topic.}
Opinion dynamics are mathematical models that have been extensively 
used to investigate processes of opinion formation  resulting in stable   
  consensus   and/or clustering equilibria in multi-agent systems 
\cite{BCN20,CHK18,OZ21}. 
One of the most popular opinion dynamics is the voter model, introduced 
to study spatial conflicts between species in biology
and in interacting  particle systems \cite{CS73,HL75}.
The investigation of majority update rules originates
from the study of consensus processes in spin systems
\cite{KL03}. Over the last decades,  several variants of the basic majority
dynamics  have been studied \cite{BCN20,BHKLRS22,DGMSS11,MNT14}. 

\iffalse 
In particular, bounds on the convergence 
time of  majority rules with small (i.e. constant-size) samples in the 
presence of  byzantine and/or faulty agents have been derived  for the 
complete graphs in \cite{DGMSS11} and for special classes of  graphs in 
\cite{}. 
\fi 

The recent past has witnessed increasing interest for biased variants 
of opinion dynamics, modelling multi-agents systems in which agents may 
exhibit a bias towards one or a subset of the opinions, for example 
reflecting the appeal represented by the diffusion of 
an innovative  technology in a social system. This general problem has been investigated 
under a number of models \cite{ABCPR20,BHKLRS22,CMQR21,LGP22}. 
In general, the focus of this line of work is 
different from ours, mostly being on the sometimes complex interplay between bias and 
convergence to an equilibrium, possibly represented by global adoption 
of one of the opinions. In contrast, our focus is on how quickly 
dynamics can converge to the (unknown) correct opinion, 
i.e., how fast a piece of information can be disseminated within a 
system of anonymous and passive agents, that can infer the 
``correct'' opinion only by accessing random samples of the opinions held  
by other agents.\footnote{For reference, it is easy to 
verify that majority or best-of-$k$ majority rules 
\cite{schoenebeck2018consensus} (which have 
frequently been considered in the above literature)
 in general fail to complete the 
dissemination task we consider.}

%The recent past has seen increasing attention biased op has been devoted to the study 
%of majority rules in the presence  of some forms of bias towards one 
%specific ``preferred'' opinion. This setting aims at modelling 
%multi-agents systems in which there is an intrinsic superiority of one 
%alternative over the others, for instance representing the diffusion of 
%an innovative  technology in a social system.  A key issue here is to 
%establish bounds on the convergence time towards the absorbing state 
%where all the agents get the unique preferred opinion. This issue has 
%been addressed in \cite{ABCPR20,BHKLRS22,CMQR21,LGP22}. It is important to 
%emphasize that all the biased opinion models mentioned above 
%significantly depart from the opinion-formation process we consider in 
%this paper: indeed, while in the former there are no seed (i.e. 
%``superior'') agents and all the agents ``know'' the same biased 
%(majority) opinion, in the latter, a subset of (unknown) agents   keeps 
%the ``superior'' opinion since the very beginning (i.e. they behave as 
%zealots), while the others have to ``recover'' this opinion without 
%any knowledge and/or mechanism helping in doing that. A good way to get 
%a clear evidence of the deep difference between the two frameworks 
%above is to verify how any majority rule definitely fails to achieve 
%the correct consensus in our, former model.
%\lb{This second paragraph needs some improvement --Luca}

\paragraph{Consensus in the presence of zealot agents.} A large body of 
work considers opinion dynamics in the presence of zealot agents, i.e., 
agents (generally holding heterogeneous opinions) that
never depart from their initial opinion~\cite{DCN22,Ma15,MTB20,YOASS13} 
and may try to influence the rest of the agent population. In this case, the 
process resulting from a certain dynamics can result in equilibria 
characterized by multiple opinions. The main focus of this body of work is investigating 
the impact of the number of zealots, their positions in the network and 
the topology of the network itself on such equilibria
\cite{FE14,Ma15,MTB20,YOASS13}, especially when the social goal of the 
system may be achieving self-stabilizing ``almost-consensus'' on 
opinions that are different from those supported  by the zealots.
Again, the focus of the present work is considerably different, so that  
previous results on consensus in the presence of zealots do not carry 
over, at least to the best of our knowledge.

\section{Notations and Preliminaries} \label{sec:prelim}
We consider a system consisting of $n$ \emph{anonymous} agents. 
%Time evolves in 
%rounds. Considered any round, each agent holds one from $k$ discrete 
%opinions, i.e., from the set $\{1,\ldots , k\}$ without loss of 
%generality. In each round, i) one agent $u$ selected uniformly at random is 
%\emph{activated}; ii) the activated agent $u$ is presented with the opinions of 
%$\ell$ other agents in the system, sampled uniformly at random (with 
%replacement); iii) agent $u$ revises its opinion based on those 
%contained in the sample. 
We denote by $x_u^{(t)}$ the opinion held by agent $u$ at the end of round $t$, dropping the superscript whenever it is clear from the context. The \emph{configuration} of the 
system at round $t$ is the vector $\mathbf{x^{(t)}}$ with $n$ entries, such that its 
$u$'th entry is $x_u^{(t)}$.

%Initially, one \emph{source} agent has a \emph{correct} opinion 
%(opinion $1$ without loss of generality) and it never changes its opinion 
%throughout the entire process. All other, non-source agents start in some 
%\emph{arbitrary} initial configuration and they are unaware of 
%which of the opinions is the correct one. 
We are interested in dynamics 
that efficiently \emph{disseminate} the correct opinion. I.e., (i) they 
eventually bring the system into the \emph{correct 
configuration} in which all agents share the correct opinion, and 
%\amos{Amos: we should be consistent when we assume 1 is the correct opinion and when we don't. I suggest that we keep this part %general, i.e, without this assumption and use it only when we really need it.}
(ii) they do so in as few rounds as possible. 
For brevity, we sometimes refer to the latter quantity as 
\emph{convergence time}. 
If $T$ is the convergence time of an execution, we denote 
by $T/n$ the average number of activations per agent, a measure often referred to 
as \emph{parallel time} in the distributed computing literature 
\cite{10.1016/j.ipl.2022.106314}. 
%\noindent\textbf{Remark.} 
For ease of exposition, in the remainder we assume that opinions are binary 
(i.e., they belong to $\{1, 0\}$).
%we consider that the work of opinions is binary, i.e., that opinions are taken from the set $\{1, 0\}$. 
We remark the following: (i) the lower bound on the convergence time  given in Section \ref{sec:lower} already applies by restricting attention to the binary case, and, 
(ii) it is easy to extend the analysis of the voter model given in 
Section \ref{sec:upper} to the general case of $k$ opinions using 
standard arguments. These are briefly summarized in Subection 
\ref{subse:apx_multiple}  for the sake of completeness.

%\andy{I would propose ''convergence time'': it is better to talk about bit-dissemination and not confuse people with other form of consensus.}

\paragraph{Memoryless dynamics.} 
We consider dynamics in which, beyond being anonymous, non-source agents are 
memoryless and identical. 
%Moreover, we require the system to be in a  
%\emph{valid} configuration at any point of the execution, 
%i.e., a configuration containing only opinions that appeared in the initial configuration. 
%\amos{Amos: This validity seems to be out of place. Before, we already said that we look at the binary case. Perhaps you want to move it there?}
%\andy{I agree: I would prefer to not add further notions. It suffices to recall that:  We restrict attention to
%dynamics that are not allowed to switch to opinions that
%are not contained in the samples they see. Then, we can fix adversarially the initial binary conf....from that, the system cannot move to illegal conf.  }
We capture these and the general requirements outlined in Section 
\ref{sec:intro} by the following 
decision rule, describing the behavior of agent $u$ 
%\amos{Amos: let's be consistent with the terminology. I suggest we keep the term agent} 
%activated in any generic round of the execution: 
\begin{enumerate}
	\item $u$ is presented with a uniform 
	sample $S$ of  size $\ell$;
	\item $u$ adopts opinion $1$ with probability $g_{x_u}(|S|)$, 
	where $|S|$ denotes the number of $1$'s in sample $S$. 
\end{enumerate}
Here, $\uf_{x_u}:\{0,\ldots , \ell\}\rightarrow [0, 1]$ is a function  that 
assigns a probability value to the number of ones that appear in $S$. In 
particular, $\uf_{x_u}$ assigns probability zero to opinions with no 
support in the sample, i.e., $\uf_{x_u}(0) = 0$ and $\uf_{x_u}(\ell) = 1$.\footnote{In general, dynamics not meeting this constraint 
cannot enforce consensus.} Note that, in principle, $\uf_{x_u}$ may 
depend on the current opinion of agent $u$.
%(Note, in the binary case, we have just two functions: $\uf_{0}$ and $\uf_{1}$.)

%\andy{So, in the binary case, we have in fact two functions: $\uf_{0}$ and $\uf_{1}$, right...? is this clear in the text?}

The class of 
dynamics described by the general rule above strictly %\amos{Amos: Isn't this class correspond to all memoryless dynamics? I believe it does. If so, we should explicitly say so (plus the text above is confusing).}
%\lb{I am not sure this is the case. For example, the standard majority dynamics is not included, unless one takes $\ell\rightarrow\infty$, which is something I would prefer not to consider. Also: which part of the text is confusing?}
includes all memoryless algorithms that are based on random samples of  fixed size including the popular dynamics, such as the voter model and a large 
class of quasi-majority dynamics 
\cite{liggett2012interacting,schoenebeck2018consensus,BCN20}.

\paragraph{Markov chains.}
In the remainder, we consider discrete time, discrete space Markov chains, whose state 
space is represented by an 
integer interval $\chi = \{z, z + 1,\ldots , n\}$, for suitable $z\ge 1$ and 
$n > z$, without loss of generality (the reason for 
this labeling of the states will be clear in the next sections).
Let $X_t$ be the random 
variable that represents the state of the chain at round $t\ge 0$. The \emph{hitting time} \cite[Section 1]{levin2017markov} of state $x\in S$ is the first time the chain 
is in state $x$, namely:
\[
	\hit_x = \min\{t\ge 0: X_t = x\}.\footnote{Note that the hitting 
	time in general depends on the initial state. Following 
	\cite{levin2017markov}, we specify it when needed.}
\]
%\robin{Consider replacing $\min$ by $\inf$ ?}

A basic ingredient used in this paper 
is describing the dynamics we consider in terms of suitable 
\emph{birth-death} chains, in which the only possible transitions from 
a given state $i\ge z$ are to the states $i$, $i + 1$ (if $i\le n - 1$) and 
$i - 1$ (if $i\ge z + 1$). In the remainder, we denote by 
$\p_i$ and $\q_i$ respectively the probability of moving to 
$i + 1$ and the probability of moving to $i - 1$ when the chain is in 
state $i$. Note that $\p_n = 0$ and $\q_z = 0$. Finally, $\pr_i = 1 - 
\p_i - \q_i$ denotes the probability that, when in state $i$, the chain 
remains in that state in the next step.

\paragraph{A birth-death chain for memoryless dynamics.} 
%\amos{Amos: say that here 1 is assumed to be the correct opinion for the chain here}
The global behaviour of a system with $z$ source agents holding opinion (wlog) $1$  and in which 
all other agents revise their opinions according 
to the general dynamics described earlier when activated, is completely described by a 
birth-death chain $\mathcal{C}_1$ with state space $\{z, ... , n\}$ and the 
following transition probabilities, for $i = z,\ldots n - 1$:
\begin{align}\label{eq:p_i}
    &\p_i = \Prob{}{X_{t+1} = i + 1 \mid X_t = i}\nonumber \\
	& = \frac{n-i}{n}\sum_{s = 
	0}^{\ell}\uf_0(s)\Prob{}{|S| = s \mid X_t = i}\nonumber \\
    & = \frac{n-i}{n}\Expec{i}{\uf_0(|S|)},
\end{align}
where $X_t$ is simply the number of agents holding opinion $1$ at the end of round $t$ and where, following the notation of \cite{levin2017markov}, for a random variable 
$V$ defined over some Markov chain $\mathcal{C}$, we denote by 
$\Expec{i}{V}$ the expectation of $V$ when $\mathcal{C}$ starts in 
state $i$.
%\robin{$\ell$ is not defined in the last equality, so I would write $\Expec{}{g(|S|)}$ instead. Also I would keep the conditioning on $X_t = i$. With these changes, the equation would become
%\begin{align*}
    %&\p_i = \Prob{}{X_{t+1} = i + 1 \mid X_t = i} \\
	%& = \frac{n-i}{n}\sum_{\ell = 
	%0}^k\uf(\ell)\Prob{}{|S| = \ell \mid X_t = i} \\
    %& = \frac{n-i}{n}\Expec{i}{\uf(|S|)}.
%\end{align*}
%}
%\robin{With the aforementioned generalization, this would simply become $\frac{n-i}{n} \cdot \Expec{i}{\uf_0(|S|)}$.}
Eq. \eqref{eq:p_i} follows from the 
law of total probability applied to the possible values for $|S|$ and 
observing that (a) the transition $i \rightarrow i + 
1$ can only occur if an agent holding opinion $0$ is selected for update, which 
happens with probability $(n - i)/n$, and (b) if such an agent observes 
$s$ agent with opinion $1$ in its sample, it will adopt that opinion 
with probability $\uf_0(s)$.  
Likewise, for $i = z + 1,\ldots , n - 1$: 
\begin{equation} \label{eq:q_i}
	\q_i = \Prob{}{X_{t+1} = i - 1 \mid X_t = i} = \frac{i-z}{n}(1 - \Expec{i}{\uf_1(|S|)}),
\end{equation}
%\begin{align}\label{eq:q_i}
%	&\q_i = \Prob{}{X_{t+1} = i - 1 \mid X_t = i}\nonumber\\
%	& = \frac{i - z}{n}\sum_{s = 
%	0}^{\ell}(1 - \uf_1(s))\Prob{}{|S| = s \mid X_t = i}\nonumber \\ 
%	&= \frac{i-z}{n}(1 - \Expec{i}{\uf_1(|S|)}),
%\end{align}
%\robin{With the aforementioned generalization, this would simply become $\frac{i-z}{n} (1-\Expec{i}{\uf_1(|S|)})$.}
with the only caveat that, differently from the previous case, the 
transition $i + 1\rightarrow i$ can only occur if an agent with opinion 
$1$ is selected for update and \emph{this agent is not a source}.
%\amos{Amos: node should be agent, seed should be source}
For this chain, in addition to $\p_n = 0$ and 
$\q_z = 0$ we also have $\q_n = 0$, which follows since $\uf_1(\ell) = 1$. 

We finally note the following (obvious) connections between $\mathcal{C}_1$ and any specific opinion dynamics $P$: (i) the specific birth-death chain for $P$ is obtained from $\mathcal{C}_1$ by specifying the corresponding $\uf_0$ and $\uf_1$ in Eqs. \eqref{eq:p_i} and \eqref{eq:q_i}  above; and (ii) the expected convergence time of $P$ starting in a configuration with $i\ge z$  agents holding opinion 1 is simply $\Expec{i}{\hit_n}$. 
%\robin{What is $k$ here? Also the subscript is missing.}

%\paragraph{The modified chain $\mathcal{C}_2$.}
%\robin{We may consider moving this paragraph to Section~\ref{sec:upper} since $\mathcal{C}_2$ is not used in Section~\ref{sec:lower}.}
%\lb{Yes, this could be a good idea.}
%It should be noted that $\mathcal{C}_1$ has one absorbing state (the 
%state $n$ corresponding to consensus), hence 
%it cannot be reversible. However, we are interested in $\hit_n$, the 
%number of steps to reach state $n$. To characterize $\hit_n$, we 
%consider the chain $\mathcal{C}_2$, with transition probabilities 
%$\p_i$ and $\q_i$ for $i = z,\ldots, n-1$, and with $\p_n = 0$ (as in 
%$\mathcal{C}_1$) but $\q_n = 1$.\footnote{Setting $\q_n = 1$ is only 
%for the sake of simplicity, any positive value will do.} Obviously, for 
%any initial state $i\le n - 1$, $\hit_n$ has exactly the same 
%distribution in $\mathcal{C}_1$ and $\mathcal{C}_2$. For this reason, 
%in the remainder we consider the chain $\mathcal{C}_2$, unless 
%otherwise stated.

\section{Lower Bound} \label{sec:lower}

In this section, we prove a lower bound on the convergence time of memoryless dynamics.
We show that this negative result holds in a very-strong sense: any dynamics must take $\Omega(n^2)$ expected time even if the agents have full knowledge of the current system configuration.
%This fact, together with the result in the next section, shows that the Voter Model is optimal (up to a $\log n$ factor) in the %class of memoryless dynamics, despite its extremely small sample size of 1.

As mentioned in the previous section, we restrict the analysis to the case of two opinions, namely 0 and 1, w.l.o.g. 
%This is without loss of generality since an opinion may only be adopted if it has support in the sample -- and, equivalently here, if it already exists in the population. \amos{Amos: I don't understand the relevance of the last sentence to the assumption. To me, this assumption in the context of lower bounds does not need a justification. (In the particular case above, I don't understand the justification you wrote)}
%\lb{Agree. We have stated this already in the problem definition.}
%\robin{I'm not sure that I agree. Imagine a protocol that, when seeing opinion 0 and 1 in the sample, adopts opinion 2 (for instance to simulate the undecided state strategy). Such protocol can enforce consensus, and their analysis \textit{cannot} be reduced to the binary case. That is why I think the constraint that agents cannot use opinions they don't see is important here. But maybe this is clear enough and does not need explanation?}
To account for the fact that agents have access to the exact configuration of the system, we slightly modify the 
notation introduced in Section~\ref{sec:prelim}, so that here $\uf_{x_u}:\{0,\ldots,n\} \rightarrow [0,1]$
assigns a probability to the number of ones that appear in the population, rather than in a random sample of size~$\ell$.
%\amos{Amos: We are considering sampling without replacement (btw, we should say that explicitly). So, this case corresponds to $\ell=n$. The footnote should be removed.}
%\robin{Actually, I really think we are sampling \textit{with} replacement... We must agree and then clarify this in the problem definition.}
%\lb{Added, emphasized, in problem definition. Once should suffice I hope}
%\footnote{One may think of this as the limit case $\ell \rightarrow +\infty$.}.
Before we prove our main result, we need the following technical results. 

%Their proofs are deferred to Section~\ref{app:lower} of the supplementary materials.
\begin{lemma} \label{lem:lower_bound_sum_inverse}
	For every~$N \in \mathbb{N}$, for every $x \in \mathbb{R}^N$ s.t. for every~$i \in \{1,\ldots,N\}$, $x_i > 0$, we have either
    $\sum_{i=1}^N x_i \geq N$ or $\sum_{i=1}^N \frac{1}{x_i} \geq N$.
    %\begin{equation*}
    %	\sum_{i=1}^N x_i \geq N \quad \text{ or } \quad \sum_{i=1}^N \frac{1}{x_i} \geq N.
    %\end{equation*}
\end{lemma}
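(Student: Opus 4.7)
The natural plan is to prove this by contradiction via the Cauchy--Schwarz inequality applied to the vectors $(\sqrt{x_1},\ldots,\sqrt{x_N})$ and $(1/\sqrt{x_1},\ldots,1/\sqrt{x_N})$, which are well-defined since all $x_i > 0$.

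First I would observe that Cauchy--Schwarz gives
\[
N = \sum_{i=1}^N \sqrt{x_i}\cdot\frac{1}{\sqrt{x_i}} \leq \sqrt{\sum_{i=1}^N x_i}\cdot\sqrt{\sum_{i=1}^N \frac{1}{x_i}},
\]
so that squaring both sides yields
\[
\left(\sum_{i=1}^N x_i\right)\left(\sum_{i=1}^N \frac{1}{x_i}\right) \geq N^2.
\]
Then I would conclude by contradiction: if both sums were strictly smaller than $N$, their product would be strictly smaller than $N^2$, contradicting the displayed inequality. Hence at least one of the two sums must be at least $N$.

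Alternatively, one could derive the same conclusion directly from AM--GM: the inequality $\sum_i x_i \geq N (\prod_i x_i)^{1/N}$ applied to both $(x_i)$ and $(1/x_i)$ forces $\prod_i x_i < 1$ and $\prod_i x_i > 1$ simultaneously under the assumption that both sums are less than $N$, which is impossible. Either route works; I would favor the Cauchy--Schwarz version because it is a one-line product inequality with no case analysis. There is no real obstacle here; the only mild subtlety is making sure the strict positivity assumption is used so that $1/\sqrt{x_i}$ is defined, which the hypothesis $x_i > 0$ handles directly.
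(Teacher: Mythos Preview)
Your proof is correct. The Cauchy--Schwarz route yields the product inequality $\left(\sum_i x_i\right)\left(\sum_i 1/x_i\right)\ge N^2$ directly, from which the dichotomy follows immediately; this is clean and requires no case split. The paper instead takes essentially the AM--GM route you mention as an alternative: it assumes $\sum_i x_i \le N$, applies AM--GM to bound $\prod_i x_i \le 1$, and then applies AM--GM again to $(1/x_i)$ to conclude $\sum_i 1/x_i \ge N$. Your Cauchy--Schwarz argument is slightly more economical (a single inequality, no separate treatment of the two cases), while the paper's version is a direct implication rather than a contradiction; both are equally elementary and there is nothing to choose between them mathematically.
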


\begin{proof} %[Proof of Lemma~\ref{lem:lower_bound_sum_inverse}]
    Consider the case that $\sum_{i=1}^n x_i \leq N$. 
    Using the inequality of arithmetic and geometric means, we can write
    \begin{equation*}
    	1 \geq \frac{1}{N} \sum_{i=1}^N x_i \geq \left(\prod_{i=1}^N x_i \right)^{\frac{1}{N}}.
    \end{equation*}
    Therefore,
    \begin{equation*}
    	1 \leq \left(\prod_{i=1}^N \frac{1}{x_i}\right)^{\frac{1}{N}} \leq \frac{1}{N} \sum_{i=1}^N \frac{1}{x_i},
    \end{equation*}
    which concludes the proof of Lemma~\ref{lem:lower_bound_sum_inverse}.
\end{proof}

%\andy{It seems we have never stated that our expectation for the bounds on T are computed w.r.t. the random selection of the %active agent, the sample,  *and* the randomness of the decision rule (functions g()). Do you think we should do that somewhere?} 
%\amos{Amos: yes, in the definition of convergence time}

\begin{lemma} \label{lem:lower_bound_hit}
    Consider any birth-death chain on~$\{0,\ldots,n\}$. For~$1 \leq i \leq j \leq n$, let $a_i = q_i/p_{i-1}$ and $a(i:j) = \prod_{k=i}^j a_k$. Then, $\Expec{0}{\hit_n} \geq \sum_{1 \leq i < j \leq n} a(i:j)$.
    %\begin{equation*}
    %    \Expec{0}{\hit_n} \geq \sum_{1 \leq i < j \leq n} a(i:j).
    %\end{equation*}
\end{lemma}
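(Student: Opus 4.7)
The plan is to derive a closed-form expression for $h_i := \Expec{i}{\hit_{i+1}}$ via a standard first-step analysis on the birth-death chain, unroll the resulting recursion, and then exploit $q_{i+1} \leq 1$ to shift one index so that the resulting double sum matches the target.

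First I would write the recursion $p_i h_i = 1 + q_i h_{i-1}$, obtained by conditioning on a single transition out of state $i$ (the self-loop and the return-trip contributions cancel the $h_i$ terms on both sides), with base case $h_0 = 1/p_0$ since $q_0 = 0$. A straightforward induction on $i$ then yields the classical closed form
\[
	h_i \;=\; \frac{1}{p_i}\sum_{k=0}^{i} a(k+1:i),
\]
under the convention that the empty product $a(i+1:i)$ equals $1$. The inductive step uses $a_i \cdot a(k+1:i-1) = a(k+1:i)$, together with the identity $a(i+1:i) = 1$ to absorb the constant $1$ inside the sum.

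The key step is the identity $a(k+1:i)\cdot q_{i+1} = a(k+1:i+1)\cdot p_i$, which together with $q_{i+1} \leq 1$ gives $a(k+1:i)/p_i \geq a(k+1:i+1)$. Plugging this into the closed form for $h_i$, summing over $i \in \{0,\ldots,n-1\}$, and reindexing with $j = i+1$ and $s = k+1$, I obtain
\[
	\Expec{0}{\hit_n} \;=\; \sum_{i=0}^{n-1} h_i \;\geq\; \sum_{j=1}^{n}\sum_{s=1}^{j} a(s:j) \;=\; \sum_{j=1}^{n} a_j \;+\; \sum_{1 \leq i < j \leq n} a(i:j),
\]
which yields the claim since every $a_j$ is non-negative.

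The main obstacle I foresee is this index-shifting subtlety: a naive bound using only $1/p_i \geq 1$ in the closed form for $h_i$ produces a final double sum whose outer index is capped at $n-1$, thereby missing all terms $a(i:n)$ in the target, which can be arbitrarily large when some $a_k$ exceeds $1$. Exploiting $q_{i+1} \leq 1$ to trade one factor of $1/p_i$ for an extra factor $a_{i+1}$ in each product is precisely what extends the summation range from $n-1$ to $n$ and makes the bound tight enough for the lemma.
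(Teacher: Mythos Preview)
Your proof is correct and follows essentially the same route as the paper: both compute $\Expec{\ell-1}{\hit_\ell}$ in closed form (the paper cites the formula $\frac{1}{q_\ell w_\ell}\sum_{i=0}^{\ell-1}w_i$ from \cite{levin2017markov}, you re-derive the equivalent expression $\frac{1}{p_{\ell-1}}\sum_{k=0}^{\ell-1}a(k+1:\ell-1)$ via first-step analysis), then invoke $q_\ell\le 1$ to drop the prefactor and sum over $\ell$. Your ``index-shifting'' identity $a(k+1:i)/p_i = a(k+1:i+1)/q_{i+1}$ is exactly the same step the paper takes when it rewrites $\frac{1}{q_\ell w_\ell}\sum w_i$ as $\frac{1}{q_\ell}\sum a(i:\ell)$, so the two arguments coincide after reindexing.
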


\begin{proof} %[Proof of Lemma~\ref{lem:lower_bound_hit}]
    Let~$w_0 = 1$ and for $i \in \{1,\ldots,n\}$, let $w_i = 1/a(1:i)$. The following result is well-known (see, e.g., Eq.~(2.13) in \cite{levin2017markov}). For every~$\ell \in \{1,\ldots,n\}$,
    \begin{equation*} 
        \Expec{\ell-1}{\tau_\ell} = \frac{1}{q_\ell w_\ell} \sum_{i=0}^{\ell-1} w_i.
    \end{equation*}
    Thus,
    \begin{equation*}
    	\Expec{\ell-1}{\tau_\ell} = \frac{1}{q_\ell} \sum_{i=0}^{\ell-1} \frac{a(1:\ell)}{a(1:i)} = \frac{1}{q_\ell} \sum_{i=1}^{\ell} a(i:\ell) \geq \sum_{i=1}^{\ell} a(i:\ell).
    \end{equation*}
    Eventually, we can write
    \begin{equation*}
    	\Expec{0}{\tau_n} = \sum_{\ell = 1}^n \Expec{\ell-1}{\tau_\ell}\geq \sum_{1 \leq i < j \leq n} a(i:j),
    \end{equation*}
    which concludes the proof of Lemma~\ref{lem:lower_bound_hit}.
\end{proof}

\begin{theorem}\label{thm:lowbound}
    Fix~$z \in \mathbb{N}$. In the presence of~$z$ source agents, the expected convergence time of any memoryless dynamics is at least $\Omega(n^2)$, even when each sample contains the complete configuration of the opinions in the system, i.e., the case $\ell=n$.
    % and within $\bigO\left(n^2\log n\log\frac{1}{\delta}\right)$ rounds with probability at least $1 - \delta$, for $0 < \delta < 1$. 
\end{theorem}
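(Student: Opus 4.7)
The plan is to combine Lemma~\ref{lem:lower_bound_hit} with Lemma~\ref{lem:lower_bound_sum_inverse}. First, after an index shift that brings the chain $\mathcal{C}_1$ of Section~\ref{sec:prelim} onto $\{0,\ldots,n-z\}$, Lemma~\ref{lem:lower_bound_hit} immediately yields
\begin{equation*}
\Expec{z}{\hit_n} \;\geq\; \sum_{z<i<j\leq n} a(i:j), \qquad a_i = \frac{\q_i}{\p_{i-1}}.
\end{equation*}
Substituting the explicit forms $\p_i = \frac{n-i}{n}\uf_0(i)$ and $\q_i = \frac{i-z}{n}(1-\uf_1(i))$ that hold under the strong model ($\ell=n$), the ratio $a_i$ factors as the product of a model-imposed ``geometric'' term $(i-z)/(n-i+1)$ and a protocol-dependent term $y_i := (1-\uf_1(i))/\uf_0(i-1)$. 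The plan is then to argue that, whichever way the protocol tunes the $y_i$'s, one of two complementary lower bounds on $\Expec{z}{\hit_n}$ must be $\Omega(n^2)$.

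The key step is to apply Lemma~\ref{lem:lower_bound_sum_inverse} at an appropriate scale to the sequence $(y_i)_{z<i\leq n}$ (or to carefully chosen partial products of it). Lemma~\ref{lem:lower_bound_sum_inverse} produces the dichotomy ``either $\sum y_i = \Omega(n)$ or $\sum 1/y_i = \Omega(n)$''. In the first branch the ``protocol factor'' does not cancel the geometric factor, so that many individual ratios $a_i$ are of order $(i-z)/(n-i+1)$ or larger; by summing the products $a(i:j)$ over $\Theta(n)$ different interval lengths one recovers $\sum_{z<i<j\leq n} a(i:j)=\Omega(n^2)$, using that the geometric factor is symmetric around the midpoint $(n+z)/2$ and contributes an $\Omega(1)$ term in that region. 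In the second branch $\uf_0$ is forced to be small on many coordinates, hence $\p_{i-1}=O(1/n)$ for $\Theta(n)$ indices~$i$. Since the chain must pass through every intermediate state, the bound $\Expec{z}{\hit_n}\geq \sum_{i=z}^{n-1} 1/\p_i$ (which is immediate from the proof of Lemma~\ref{lem:lower_bound_hit} before the relaxation $1/\q_\ell\geq 1$ is used) then directly gives $\Omega(n\cdot n)=\Omega(n^2)$.

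The main obstacle is that Lemma~\ref{lem:lower_bound_sum_inverse} only produces a dichotomy for a \emph{single} collection of values, while the sum in Lemma~\ref{lem:lower_bound_hit} involves products over \emph{arbitrary} subintervals and is sensitive to even a single very small $a_i$. The key technical step will therefore be to exhibit $\Theta(n)$ essentially disjoint groupings of indices and apply Lemma~\ref{lem:lower_bound_sum_inverse} within each grouping independently, so that each grouping contributes $\Omega(n)$ to one of the two complementary sums ($\sum a(i:j)$ or $\sum 1/\p_i$) that together lower bound $\Expec{z}{\hit_n}$. Aggregating $\Theta(n)$ contributions of size $\Omega(n)$ each then produces the claimed $\Omega(n^2)$, uniformly in the choice of the memoryless rule $(\uf_0,\uf_1)$.
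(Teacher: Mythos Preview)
Your proposal has a fundamental gap: you only analyse \emph{one} chain, namely the chain $\mathcal{C}_1$ in which the source holds opinion~$1$, and you try to prove that $\Expec{z}{\hit_n}=\Omega(n^2)$ for every memoryless rule. This is false. Consider the rule $\uf_0\equiv 1$, $\uf_1\equiv 1$ (every activated agent adopts opinion~$1$ unconditionally). Then $\p_i=(n-i)/n$ and $\q_i=0$, so $\Expec{z}{\hit_n}=\sum_{i=z}^{n-1} n/(n-i)=O(n\log n)$. No dichotomy on the $y_i$'s can rescue this: with $y_i=0$ for all~$i$ you land in your ``second branch'', yet the chain is fast, not slow. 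Your claim in that branch that ``$\uf_0$ is forced to be small on many coordinates'' is in fact backwards, since $1/y_i=\uf_0(i-1)/(1-\uf_1(i))$ being large means $\uf_0(i-1)$ is \emph{large} (or $1-\uf_1(i)$ is small), and either way $\q_i$ is small and $\p_{i-1}$ is not, which helps convergence to~$n$.

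The missing idea is that the rule $(\uf_0,\uf_1)$ does not know which opinion is correct, so it must be analysed against \emph{both} possibilities. The paper couples the chain $\mathcal{C}$ (source holds~$1$, track the number of $1$'s) with the chain $\mathcal{C}'$ (source holds~$0$, track the number of $0$'s) and observes that on the window $\{n/4,\ldots,3n/4\}$ the protocol-dependent factors cancel exactly: $a_{n-i+1}\cdot a_i'=\tfrac{i-z}{i}\cdot\tfrac{n-i+1-z}{n-i+1}\geq(1-4z/n)^2$, whence $a'(i:j)\geq c\,/\,a(n-j+1:n-i+1)$. Lemma~\ref{lem:lower_bound_sum_inverse} is then applied once, to the $N=\Theta(n^2)$ products $\{a(i:j)\}_{i<j}$ themselves (not to the individual $y_i$'s), yielding that either $\sum a(i:j)\geq N$ or $\sum a'(i:j)\geq cN$; Lemma~\ref{lem:lower_bound_hit} converts each alternative into an $\Omega(n^2)$ hitting time for the corresponding chain. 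Your ``disjoint groupings'' workaround is not needed once the second chain is in the picture, and cannot work without it.
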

\begin{proof}
    Fix $z \in \mathbb{N}$. Let~$n \in \mathbb{N}$, s.t. $n>4z$, and let $P$ be any memoryless dynamics.
    The idea of the proof is to show that the birth-death chain associated with~$P$, obtained from the chain $\mathcal{C}_1$ described in Section \ref{sec:prelim} by specifying $\uf_0$ and $\uf_1$ for the dynamics $P$, cannot be ``fast'' in both directions at the same time.
    %For the sake of neglecting the source agents, %\amos{Amos: neglecting source agents sounds bad. That's the reason I suggested to take $n>4z$.}
    We restrict the analysis to the subset of states $\chi = \{n/4,\ldots,3n/4\}$.
    More precisely, we consider the two following birth-death chains:
    
    \begin{itemize}
        \item $\mathcal{C}$ with state space $\chi$, whose states represent the number of agents with opinion~$1$, and assuming that the source agents hold opinion~$1$.
        \item $\mathcal{C}'$ with state space $\chi$, whose states represent the number of agents with opinion~$0$, and assuming that the source agents hold opinion~$0$.
    \end{itemize}

    Let~$\tau_{3n/4}$ (resp. $\tau'_{3n/4}$) be the hitting time of the state $3n/4$ of chain $\mathcal{C}$ (resp. $\mathcal{C}'$).
    We will show that
    \begin{equation*}
        \max \left( \Expec{n/4}{\tau_{3n/4}} , \Expec{n/4}{\tau'_{3n/4}} \right) = \Omega(n^2).
    \end{equation*}
    Let~$g_0,g_1 : \chi \rightarrow [0,1]$ be the functions describing~$P$ over~$\chi$.
    Following Eqs.~\eqref{eq:p_i} and~\eqref{eq:q_i} in Section~\ref{sec:prelim}, we can derive the transition probabilities for~$\mathcal{C}$ as
    \begin{equation}
        p_i = \frac{n-i}{n} g_0(i), \qquad q_i = \frac{i-z}{n} (1-g_1(i)).
    \end{equation}
    Note that the expectations have been removed as a consequence of agents having ``full knowledge'' of the configuration. 
    Similarly, for~$\mathcal{C}'$, the transition probabilities are
    \begin{equation}
        p_i' = \frac{n-i}{n} (1-g_1(n-i)), \qquad q_i' = \frac{i-z}{n} g_0(n-i).
    \end{equation}
    Following the definition in the statement of Lemma~\ref{lem:lower_bound_hit}, we define~$a_i$ and~$a'_i$ for $\mathcal{C}$ and $\mathcal{C}'$ respectively. We have
    \begin{equation*}
        a_i = \frac{q_i}{p_{i-1}} = \frac{i-z}{n-i+1} \cdot \frac{1-g_1(i)}{g_0(i-1)},
    \end{equation*}
    and
    \begin{equation*}
        a_i' = \frac{q_i'}{p_{i-1}'} = \frac{i-z}{n-i+1} \cdot \frac{g_0(n-i)}{1-g_1(n-i+1)}.
    \end{equation*}
    Observe that we can multiply these quantities by pairs to cancel the factors on the right hand side:
    \begin{equation} \label{eq:a_i_product}
        a_{n-i+1} \cdot a_i' = \frac{i-z}{i} \cdot \frac{n-i+1-z}{n-i+1}.
    \end{equation}
    $(i-z)/i$ is increasing in~$i$, so it is minimized on $\chi$ for~$i = n/4$. Similarly, $(n-i+1-z)/(n-i+1)$ is minimized for~$i = 3n/4$. Hence, we get the following (rough) lower bound from Eq.~\eqref{eq:a_i_product}: for every~$i \in \chi$,
    \begin{equation} \label{eq:a_i_product_lowerbound}
        a_{n-i+1} \cdot a_i' \geq \left(1-\frac{4z}{n}\right)^2.
    \end{equation}
    Following the definition in the statement of Lemma~\ref{lem:lower_bound_hit}, we define~$a(i:j)$ and~$a'(i:j)$ for $\mathcal{C}$ and $\mathcal{C}'$ respectively. From Eq.~\eqref{eq:a_i_product_lowerbound}, we get for any~$i,j \in \chi$ with~$i\leq j$:
    \begin{align*}
        a'(i:j) &\geq \left( 1- \frac{4z}{n} \right)^{2(j-i+1)} \frac{1}{a(n-j+1:n-i+1)} \\
        & \geq \left( 1- \frac{4z}{n} \right)^n \frac{1}{a(n-j+1:n-i+1)}. %\\
        %& \underset{n\to +\infty}{\longrightarrow} \frac{e^{-4z}}{a(n-j+1:n-i+1)}.
    \end{align*}
    Let~$c = c(z) = \exp(-4z)/2$.
    For~$n$ large enough,
    \begin{equation} \label{eq:a_i_prime_lowerbound}
        a'(i:j) \geq \frac{c}{a(n-j+1:n-i+1)}.
    \end{equation}
    Let~$N = n^2/8 + n/4$.
    By Lemma~\ref{lem:lower_bound_sum_inverse}, either
    \begin{equation*}
        \sum_{\substack{i,j \in \chi \\ i<j}} a(i:j) \geq N,
    \end{equation*}
    or (by Eq.~\eqref{eq:a_i_prime_lowerbound})
    \begin{equation*}
        \sum_{\substack{i,j \in \chi \\ i<j}} a'(i:j) \geq c \sum_{\substack{i,j \in \chi \\ i<j}} \frac{1}{a(i:j)} \geq cN.
    \end{equation*}
    By Lemma~\ref{lem:lower_bound_hit}, it implies that either
    \begin{equation*}
        \Expec{n/4}{\tau_{3n/4}} \geq N, \quad \text{ or } \quad \Expec{n/4}{\tau'_{3n/4}} \geq cN.
    \end{equation*}
    In both cases, there exists an initial configuration for which at least~$\Omega(n^2)$ rounds are needed to achieve consensus, which concludes the proof of Theorem~\ref{thm:lowbound}.
\end{proof}

\section{The Voter Model is (Almost) Optimal} \label{sec:upper}
The voter model
is the popular dynamics in which the random agent $v$, activated
at round $t$, pulls another agent $u \in V$ u.a.r. and updates its opinion to the
opinion of $u$.

In this section, we prove that this dynamics achieves consensus within
$\bigO(n^2 \log n)$ rounds in expectation. We prove the result for $z = 1$,
noting that the upper bound can only improve for $z > 1$. Without loss of generality, we assume that 1 is the correct opinion.

\paragraph{The modified chain $\mathcal{C}_2$.}
In principle, we could study convergence of the voter model using the 
chain $\mathcal{C}_1$ introduced in Section \ref{sec:prelim}  and 
used to prove the results of Section \ref{sec:lower}. 
Unfortunately, $\mathcal{C}_1$ has one absorbing state (the 
state $n$ corresponding to consensus), hence 
it is not reversible, so that we cannot leverage 
known properties of reversible birth-death chains \cite[Section
2.5]{levin2017markov} that would simplify the proof. Note however that 
we are interested in $\hit_n$, the 
number of rounds to reach state $n$ under the voter model. To this 
purpose, it is possible to 
consider a second chain $\mathcal{C}_2$ that is almost identical to 
$\mathcal{C}_1$ but reversible. In particular, the transition probabilities 
$\p_i$ and $\q_i$ of $\mathcal{C}_2$ are the same as in $\mathcal{C}_1$, 
for $i = z,\ldots, n-1$. Moreover, we have $\p_n = 0$ (as in 
$\mathcal{C}_1$) but $\q_n = 1$.\footnote{Setting $\q_n = 1$ is only 
for the sake of simplicity, any positive value will do.} Obviously, for 
any initial state $i\le n - 1$, $\hit_n$ has exactly the same 
distribution in $\mathcal{C}_1$ and $\mathcal{C}_2$. For this reason, 
in the remainder of this section we consider the chain $\mathcal{C}_2$, unless 
otherwise stated.

\begin{theorem}\label{thm:uppbound}
For $z = 1$, the voter model achieves consensus to opinion $1$ within $\bigO(n^2
\log n)$ rounds in expectation and within $\bigO\left(n^2\log
n\log\frac{1}{\delta}\right)$ rounds with probability at least $1 - \delta$, for
$0 < \delta < 1$. 
\end{theorem}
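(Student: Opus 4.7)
The plan is to specialize the birth-death chain $\mathcal{C}_2$ of Section~\ref{sec:prelim} to the voter model, derive a closed-form expression for the single-step first-passage times $\phi_k := \Expec{k}{\hit_{k+1}}$, sum them to upper bound $\Expec{1}{\hit_n}$, and finally boost the expectation bound to a high-probability statement via iterated Markov. Note that state $1$ is the worst-case initial condition, since $\phi_k \geq 0$ makes $i \mapsto \Expec{i}{\hit_n} = \sum_{k=i}^{n-1} \phi_k$ non-increasing in $i$.

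For the voter model ($\ell = 1$), $\uf_0$ and $\uf_1$ are the identity on $\{0,1\}$, so Eqs.~\eqref{eq:p_i}--\eqref{eq:q_i} specialize to $\p_i = i(n-i)/n^2$ and $\q_i = (i-1)(n-i)/n^2$ for $1 \leq i \leq n-1$. In particular $\q_1 = 0$ (so $\phi_1 = 1/\p_1$), and the ratio $\q_k/\p_k = (k-1)/k$ is pleasantly independent of $n$. Standard first-step analysis gives $\p_k \phi_k = 1 + \q_k \phi_{k-1}$ for $k \geq 2$, which after multiplying by $k$ rewrites as
\begin{equation*}
k\phi_k - (k-1)\phi_{k-1} \;=\; \frac{k}{\p_k} \;=\; \frac{n^2}{n-k}.
\end{equation*}
Setting $\psi_k = k\phi_k$ this is a telescoping first-order recurrence with $\psi_1 = n^2/(n-1)$, and solving yields $\phi_k = \frac{n^2}{k}\sum_{j=1}^{k}\frac{1}{n-j}$.

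Summing over $k$, and writing $H_m = \sum_{j=1}^m 1/j$, we get $\Expec{1}{\hit_n} = n^2 \sum_{k=1}^{n-1}\frac{1}{k}\bigl(H_{n-1} - H_{n-k-1}\bigr)$, so the task reduces to bounding this double sum by $\bigO(\log n)$. I would split at $k = n/2$: for $k \leq n/2$, the harmonic difference has $k$ terms each at most $2/n$, giving $(1/k)(H_{n-1}-H_{n-k-1}) \leq 2/n$ and a total contribution of $\bigO(1)$; for $k > n/2$, the harmonic difference is at most $H_{n-1} \leq \ln n + 1$ while $1/k \leq 2/n$, contributing $\bigO(\log n)$. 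This yields $\Expec{1}{\hit_n} = \bigO(n^2 \log n)$. For the tail bound, since $\Expec{i}{\hit_n} \leq \Expec{1}{\hit_n}$ for every $i$, a standard doubling argument using Markov's inequality and the strong Markov property gives $\Prob{1}{\hit_n > 2 \Expec{1}{\hit_n}} \leq 1/2$, and iterating $\lceil \log_2(1/\delta) \rceil$ times boosts this to the claimed $\bigO(n^2 \log n \log(1/\delta))$ high-probability bound.

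The main subtle point is the harmonic-sum estimate: naively bounding the inner difference by $H_{n-1}$ throughout would give $\bigO(n^2 \log^2 n)$, losing a $\log$ factor. The saving comes from recognizing that for $k \ll n$ the harmonic difference $H_{n-1} - H_{n-k-1}$ is $\Theta(k/n)$ rather than $\Theta(\log n)$, which must be exploited explicitly when splitting the sum at $k = n/2$. Everything else is routine manipulation of birth-death chain formulas and standard boosting.
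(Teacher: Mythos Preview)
Your proof is correct and follows the same overall architecture as the paper's: compute the one-step first-passage times $\phi_k = \Expec{k}{\hit_{k+1}}$, sum them, bound the resulting double harmonic sum by splitting at $k\approx n/2$, and then boost to high probability via iterated Markov. The closed form you obtain for $\phi_k$, the harmonic-sum estimate, and the boosting argument all match the paper's exactly (up to an index shift).

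The one genuine difference is in how $\phi_k$ is derived. The paper introduces the modified reversible chain $\mathcal{C}_2$ (with $\q_n = 1$) specifically so it can invoke the detailed-balance formula $\Expec{k-1}{\hit_k} = \frac{1}{\q_k w_k}\sum_{j} w_j$ from \cite{levin2017markov}; this in turn forces a separate treatment of the last step $\Expec{n-1}{\hit_n}$. Your first-step recurrence $\p_k\phi_k = 1 + \q_k\phi_{k-1}$, combined with the observation that $\q_k/\p_k = (k-1)/k$ makes $k\phi_k - (k-1)\phi_{k-1}$ telescope, is more elementary: it needs neither reversibility nor the auxiliary chain, and it handles all $k\le n-1$ uniformly. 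The paper's route has the advantage of quoting a textbook identity verbatim; yours is self-contained and slightly shorter. One nit: $\mathcal{C}_2$ is defined in Section~\ref{sec:upper}, not Section~\ref{sec:prelim}, and in fact your argument never uses its reversibility --- you could work directly with $\mathcal{C}_1$ throughout.
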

%\begin{proof}

\subsection{Proof of Theorem \ref{thm:uppbound}}
We first compute the general expression for $\Expec{z}{\hit_n}$, i.e., the
expected time to reach state $n$ (thus, consensus) in $\mathcal{C}_2$ when the
initial state is $z$, corresponding to the system starting in a state in which
only the source agents hold opinion $1$. We then give a specific upper bound
when $z = 1$. First of all, we recall that, for $z$ source agents we have that
$\Expec{z}{\hit_n} = \sum_{k = z + 1}^n\Expec{k-1}{\hit_k}$. Considering the
general expressions of the $\p_i$'s and $\q_i$'s in~Eq.~\eqref{eq:p_i}
and~Eq.~\eqref{eq:q_i}, we soon observe that for the voter model $g_0 = g_1 =g$,
since the output does not depend on the opinion of the agent, and
$\Expec{}{g(|S|)} = i / n$ whenever the number of agent with opinion $1$ in the
system is $i$. Hence for $\mathcal{C}_2$ we have
\begin{equation}\label{eq:voter_transition_probs}
\begin{aligned}
    p_i & = 
    \left\{
    \begin{array}{cl}
        \frac{(n-i)i}{n^2}, & \text{ for } i = z,\ldots , n-1\\[1mm]
        0, & \text{ for } i = n
    \end{array}
    \right.\\
    q_i & =
    \left\{
    \begin{array}{cl}
        0, & \text{ for } i = z \\[1mm]
        \frac{(n-i)(i-z)}{n^2}, & \text{ for } i = z + 1, \ldots, n - 1\\[1mm]
        1, & \text{ for } i = n.
    \end{array}
    \right.
\end{aligned}
\end{equation}
The proof now proceeds along the following rounds.

\paragraph{General expression for $\Expec{k-1}{\hit_k}$.}
It is not difficult to see that
\begin{equation}\label{eq:one_step_expect}
\Expec{k-1}{\hit_k} = \frac{1}{\q_kw_k}\sum_{j=z}^{k-1}w_j\,,
\end{equation}
where $w_0 = 1$ and $w_k = \prod_{i = z+1}^k\frac{\p_{i-1}}{\q_i}$, for $k = z
+ 1, \ldots, n$.  Indeed, the $w_k$'s satisfy the detailed balanced conditions
$\p_{k-1}w_{k-1} = \q_k w_k$ for $k = z+1,\ldots, n$,
\begin{align*}
\p_{k-1}w_{k-1} 
  & = \p_{k-1}\prod_{i = z+1}^{k-1}\frac{\p_{i-1}}{\q_i} \\
  & = \p_{k-1}\frac{\q_k}{\p_{k-1}}\prod_{i = z+1}^k\frac{\p_{i-1}}{\q_i} 
	= \q_kw_k.
\end{align*}
and~Eq.~\eqref{eq:one_step_expect} follows proceeding like in~\cite[Section
2.5]{levin2017markov}.

\paragraph{Computing $\Expec{k-1}{\hit_k}$ for $\mathcal{C}_2$.}
First of all, considering the expressions of $\p_i$ and $\q_i$
in~Eq.~\eqref{eq:voter_transition_probs}, for $k = z + 1,\ldots , n - 1$ we have
\begin{align*}
	w_k & = \prod_{i=z+1}^{k}\frac{(n-i+1)(i-1)}{(i-z)(n-i)} \\
    & =	\prod_{i=z+1}^{k}\frac{n-i+1}{n-i}\cdot\prod_{i=z+1}^{k}\frac{i-1}{i-z} 
    = \frac{n-z}{n-k} \cdot\prod_{i=z+1}^{k}\frac{i-1}{i-z}.
\end{align*}
Hence
\[
	w_k = 
    \left\{
    \begin{array}{cl}
    \frac{n-z}{n-k}f(k), & \; \mbox{ for } k = z + 1,\ldots , n - 1 \\[1mm]
    \frac{(n-z)(n-1)}{n^2}f(n-1), & \; \mbox{ for } k = n 
    \end{array}
    \right.
\]
where $f(k) = \prod_{i=z+1}^{k}\frac{i-1}{i-z}$.

\paragraph{The case $z = 1$.}
In this case, the formulas above simplify and, for $k = 
z+1,\ldots , n-1$, we have
\[
\Expec{k-1}{\hit_k}
= \frac{n^2}{(k-1)f(k)}\sum_{j=1}^{k-1}\frac{f(j)}{n-j}
= \frac{n^2}{k-1}\sum_{j=1}^{k-1}\frac{1}{n-j},
\]
where the last equality follows from the fact that that $f(z) = f(z+1) = \cdots
= f(k) = 1$, whenever $z = 1$. Moreover, for $k = n$ we have
\begin{align*}
	\Expec{n-1}{\hit_n} & = \frac{1}{\q_nw_n}\sum_{j=1}^{n-1}w_j
    = \left(\frac{n}{n-1}\right)^2 \, \sum_{j=1}^{n-1}\frac{n-1}{n-j} \\
    & = \frac{n}{n-1}H_{n-1} = \bigO(\log n),
\end{align*}
where $H_{k}$ denotes the $k$-th harmonic number. 
Hence, for $z = 1$ we have
\begin{align}\label{eq:expected_ub}
	& \Expec{1}{\hit_n} = \sum_{k=2}^n\Expec{k-1}{\hit_k} \nonumber \\
    & = n^2 \sum_{k=2}^{n-1}\frac{1}{k-1}\sum_{j=1}^{k-1}\frac{1}{n-j} + 
	\bigO(\log n),
\end{align}
where in the second equality we took into account that $\Expec{n-1}{\hit_n} = \bigO(\log n)$. 
Finally, it is easy to see that 
\begin{equation}\label{eq:double_harmonic}
\sum_{k=2}^{n-1}\frac{1}{k-1}\sum_{j=1}^{k-1}\frac{1}{n-j} = \bigO(\log n)
\end{equation}
Indeed, if we split the sum at $\lfloor n/2 \rfloor$, for $k \leqslant \lfloor
n/2 \rfloor$ we have 
\begin{equation}\label{eq:dh_1}
\sum_{k=2}^{\lfloor n/2 \rfloor}\frac{1}{k-1}\sum_{j=1}^{k-1}\frac{1}{n-j}
\leqslant \sum_{k=2}^{\lfloor n/2 \rfloor}\frac{1}{k-1}\sum_{j=1}^{k-1}\frac{2}{n}
= \bigO(1)
\end{equation}
and for $k > \lfloor n/2 \rfloor$ we have
\begin{align}\label{eq:dh_2}
& \sum_{k = \lfloor n/2 \rfloor + 1}^{n-1}\frac{1}{k-1}\sum_{j=1}^{k-1}\frac{1}{n-j}
\leqslant \sum_{k = \lfloor n/2 \rfloor + 1}^{n-1} \frac{2}{n}\sum_{j=0}^{n-1}\frac{1}{n-j} \nonumber \\
& = \sum_{k = \lfloor n/2 \rfloor + 1}^{n-1} \frac{2}{n} H_n = \bigO(\log n)\,.
\end{align}
From~Eqs. \eqref{eq:dh_1} and~\eqref{eq:dh_2} we get~Eq. \eqref{eq:double_harmonic}, and
the first part of the claim follows by using in~Eq. \eqref{eq:expected_ub} the
bound in~Eq. \eqref{eq:double_harmonic}.

To prove the second part of the claim, we use a standard argument. Consider
$\lceil \log \frac{1}{\delta}\rceil$ consecutive time intervals, each consisting
of $s = 2 \lceil\Expec{1}{\hit_n}\rceil = \bigO(n^2 \log n)$ consecutive rounds.
For $i = 1,\ldots , s - 1$, if the chain did not reach state $n$ in any of the
first $i - 1$ intervals, then the probability that the chain does not reach
state $n$ in the $i$-th interval is at most $1/2$ by Markov's inequality. Hence,
the probability that the chain does not reach state $n$ in any of the intervals
is at most $\left(1/2\right)^{\log (1/\delta)} = \delta$.
%\end{proof}

%\paragraph{Remark.} It might seem reasonable to consider a different dynamics (generalizing the voter model), in which an agent samples $\ell$ neighbours uniformly and independently at random and then it adopts opinion $1$ with probability $k/\ell$ if $k$ is the number of agent with that opinion in the sample. It is easy to check that this change has no effect, in the sense that the transition probabilities of the 
%corresponding birth-death chain do not change.  

\subsection{Handling multiple opinions}\label{subse:apx_multiple}
Consider the case in which the set of possible opinions is $\{1,\ldots, 
k\}$ for $k\ge 2$, with $1$ again the correct opinion. We collapse 
opinions $2,\ldots , k$ into one class, i.e., opinion $0$ without loss 
of generality. We then consider the random variable $X_t$, giving the 
number of agents holding opinion $1$ at the end of round $t$. Clearly, 
the configuration in which all agents hold opinion $1$ is the only 
absorbing state under the voter model and 
convergence time is defined as $\min\{t\ge 0: X_t = n\}$. For a generic 
number $i$ of agents holding opinion $1$, we next compute the 
probability $p_i$ of the transition $i\rightarrow i + 1$ (for $i\le n 
- 1$) and the probability $q_i$ of the transition $i\rightarrow i - 1$ 
(for $i\ge z + 1$):
\begin{align*}
    &\p_i = \Prob{}{X_{t+1} = i + 1 \mid X_t = i} = \frac{n-i}{n}\cdot\frac{i}{n},
\end{align*}
where the first factor in the right hand side of the above equality is 
the probability of activating an agent holding an opinion other than 
$1$, while the second factor is the probability that said agent in turn 
copies the opinion of an agent holding the correct opinion. Similarly, we 
have:
\begin{align*}
	&\q_i = \Prob{}{X_{t+1} = i - 1 \mid X_t = i} = \frac{i - 
	z}{n}\cdot\frac{n-i}{n},
\end{align*}
with the first factor in the right hand side the probability of 
sampling a non-source agent holding opinion $1$ and the second factor 
the probability of this agent in turn copying the opinion of an agent 
holding any opinions other than $1$.
%=======
%In Section~\ref{sec:upper} we proved an upper bound $\bigO(n^2 \log n)$ on the
%expected convergence time of the voter model when there are only two opinions.
%However, it is straightforward to generalize that proof to the case of multiple
%opinions by collapsing in one single opinion all the opinions different from
%that of the source agents. Indeed, in the case we start from a configuration in
%which each agent has an opinion taken from a set $K$ with $|K| > 2$ and we run
%the voter model, if we define $X_t$ as the number of agents holding the opinion
%of the source agents at round $t$ then $\{X_t\}_{t \in \mathbb{N}}$ is a Markov
%chain whose transition probabilities from state $i$ to states $i \pm 1$ are
%those defined in~\eqref{eq:voter_transition_probs}.
%>>>>>>> 225123d3a6c1e4a71033b28ede2383be6761120b

The above argument implies that if we are interested in the time to 
converge to the correct opinion, variable $X_t$ is what we are actually 
interested in. On the other hand, it is immediately clear that the 
evolution of $X_t$ is described by the birth-death chain 
$\mathcal{C}_1$ introduced in Section \ref{sec:prelim} (again with $n$ 
as the only absorbing state) or by its reversible counterpart 
$\mathcal{C}_2$. This in turn implies that the 
analysis of Section \ref{sec:upper} seamlessly carries over to the case of 
multiple opinions. 
\section{Faster Dissemination with Memory} \label{sec:simulations_short}

In this section, we give experimental evidence suggesting that dynamics using a modest amount of memory can achieve consensus in an almost linear number of rounds. When compared to memory-less dynamics, this represents an exponential gap (following the results of Section~\ref{sec:lower}).

The dynamics that we use is derived from the algorithm introduced in  \cite{DBLP:conf/podc/KormanV22} and is described in the next Subsection.

\subsection{``Follow the trend'': our candidate approach} The dynamics that we run in the simulations is derived from the algorithm of \cite{DBLP:conf/podc/KormanV22}, and uses a sample size of $\ell = 10 \, \log n$.
Each time an agent is activated, it decrements a countdown by 1. When the countdown reaches 0, the corresponding activation is said to be \textit{busy}.
On a busy activation,
the agent compares the number of opinions equal to 1 that it observes, to the number observed during the last busy activation.
\begin{itemize}
    \item If the current sample contains more 1's, then the agent adopts the opinion 1.
    \item Conversely, if it contains less 1's, then it adopts the opinion 0.
    \item If the current sample contains exactly as many 1's as the previous sample, the agent remains with the same opinion.
\end{itemize}
At the end of a busy activation, the agent resets its countdown to~$\ell$ (equal to the sample size) -- so that there is exactly one busy activation every $\ell$ activations. In addition, the agent memorizes the number of 1's that it observed, for the sake of performing the next busy activation. 

Overall, each agent needs to store two integers between $0$ and~$\ell$ (the countdown and the number of opinions equal to 1 observed during the last busy activation), so the dynamics requires $2\log(\ell) = \Theta(\log \log n)$ bits of memory.

The dynamics is described formally in Algorithm~\ref{algo:sequential_follow_the_trend}.

\begin{algorithm}[!ht]
\SetKwInOut{Input}{Input}
\caption{Follow the Trend \label{algo:sequential_follow_the_trend}}
\DontPrintSemicolon
    {\bf Sample size:} $\ell = 10 \log n$ \;
    {\bf Memory:} $\cnt,\clk \in \{ 0,\ldots,\ell \}$ \;
    \Input{$k$, number of ones in a sample}% \\ $x_u$, opinion of $u$}
    \BlankLine

    \uIf{$\clk = 0$} {
        \;
        \uIf{$k < \cnt$}{
    		$x_u \leftarrow 0$ \;
    	}
    	\ElseIf{$k > \cnt$}{
    	    $x_u \leftarrow 1$ \;
         }\;
         $\cnt \leftarrow k$ \;
         $\clk \leftarrow \ell$ \;
    } \Else {
        $\clk \leftarrow \clk - 1$ \;
    }
\end{algorithm}

\subsection{Experimental results}
Simulations were performed for~$n=2^i$, where $i \in \{3,\ldots,10\}$ for the Voter model, and $i \in \{3,\ldots,17\}$ for Algorithm~\ref{algo:sequential_follow_the_trend}, and repeated $100$ times each.
Every population contains a single source agent ($z=1$).
In self-stabilizing settings, it is not clear what are the worst initial configurations for a given dynamics. Here, we looked at two different ones:
\begin{itemize}
    \item a configuration in which all opinions (including the one of the source agent) are independently and uniformly distributed in~$\{0,1\}$,
    \item a configuration in which the source agent holds opinion~$0$, while all other agents hold opinion~$1$.
\end{itemize}

%\begin{figure}[htbp]
%    \centering
%    \includegraphics[width=0.6\textwidth]{resource/plot2.png}
%    \caption{\em
%    {\bf ``Follow the Trend'' versus the Voter model.}
%    Average convergence time (in parallel rounds) is depicted for different values of~$n$, over 100 iterations each, for $z=1$ source agent.
%    Blue lines with circular markers correspond to our candidate dynamics which is an adaptation of the ``follow the trend'' algorithm from \cite{DBLP:conf/podc/KormanV22}.
%    Orange lines with triangular markers correspond to the Voter Model.
%    Full lines depict initial configurations in which all opinions are drawn uniformly at random from~$\{0,1\}$.
%    Dotted lines depict initial configurations in which all opinions are taken to be different from the source agent.
%    }
%    \label{fig:simulations_memory}
%\end{figure}

We compare it experimentally to the voter model.
Simulations were performed for~$n=2^i$, where $i \in \{3,\ldots,10\}$ for the voter model, and $i \in \{3,\ldots,17\}$ for our candidate dynamics, and repeated $100$ times each.
Results are summed up in Figure~\ref{fig:simulations_memory_short}, in terms of parallel rounds (one parallel round corresponds to~$n$ activations).
They suggest that    convergence of  our candidate dynamics takes  about~$\Theta(\polylog n)$ parallel rounds.
In terms of parallel time, this represents an exponential gap when compared to the lower bound in Theorem \ref{thm:lowbound} established for memoryless dynamics.

\begin{figure}[htbp]
    \centering
    \includegraphics[width=0.6\textwidth]{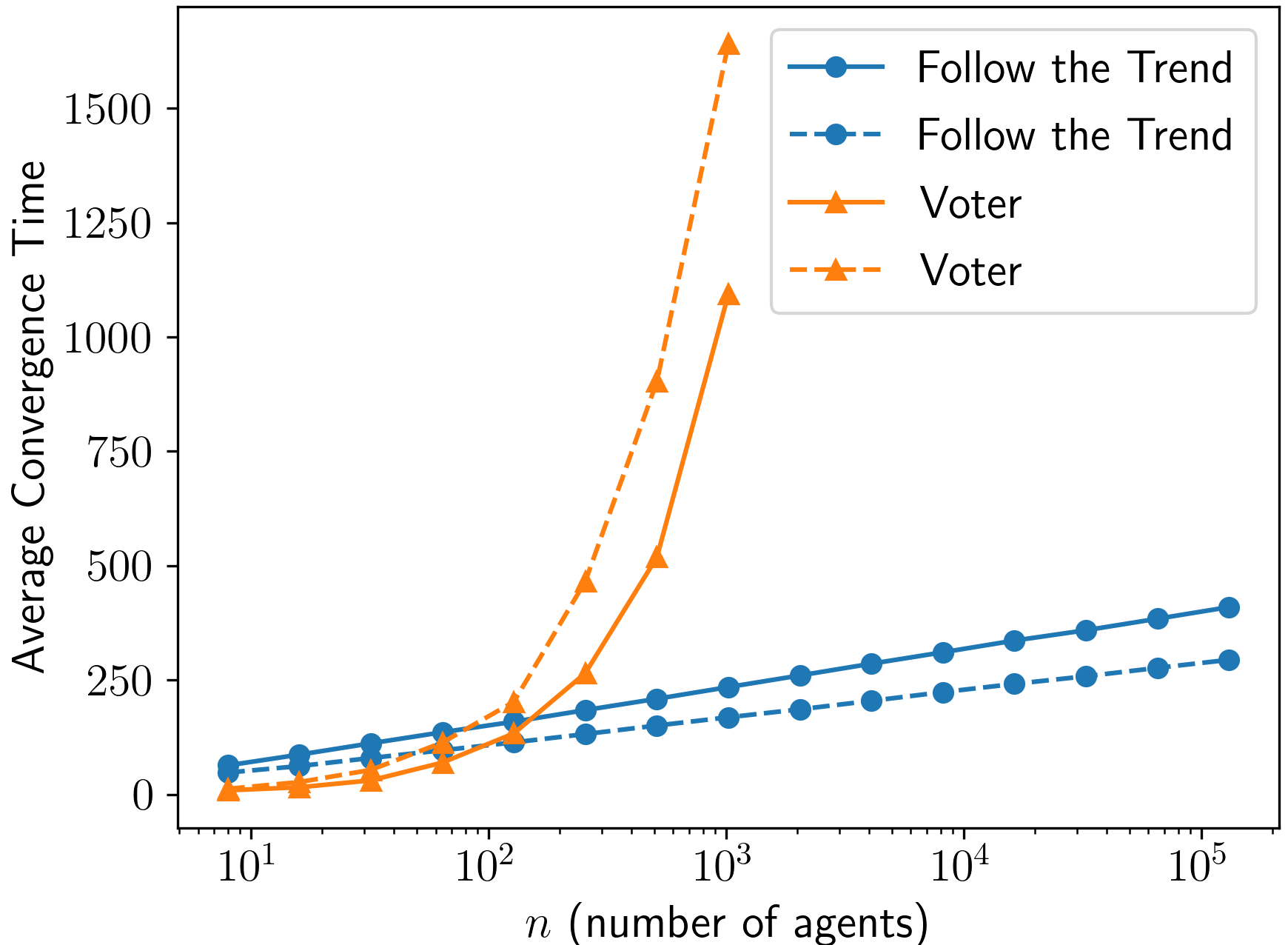}
    \caption{\em
    {\bf ``Follow the Trend'' versus the voter model.}
    Average convergence time (in parallel rounds) is depicted for different values of~$n$, over 100 iterations each, for $z=1$ source agent.
    Blue lines with circular markers correspond to our candidate dynamic which is an adaptation of the ``follow the trend'' algorithm from \cite{DBLP:conf/podc/KormanV22}.
    Orange lines with triangular markers correspond to the voter model.
    Full lines depict initial configurations in which all opinions are drawn uniformly at random from~$\{0,1\}$.
    Dotted lines depict initial configurations in which all opinions are taken to be different from the source agent.
    }
    \label{fig:simulations_memory_short}
\end{figure}

\section{Discussion and Future Work} \label{sec:conc}
This work investigates the role played by memory in multi-agent systems that rely on passive communication and aim to achieve consensus on an opinion held by few ``knowledgable'' individuals \cite{DBLP:conf/podc/KormanV22,couzin2005effective,ayalon2021sequential}. Under the model we consider, we prove that incorporating past observations in the current decision is necessary for achieving fast convergence even if the observations regarding the current opinion configuration are complete. The same lower bound proof can in fact be adapted to any process that is required to alternate the consensus (or semi-consensus) opinion, i.e., to let the population agree (or almost agree) on one opinion, and then let it agree on the other opinion, and so forth. Such oscillation behaviour is fundamental to sequential decision making processes \cite{ayalon2021sequential}.

The ultimate goal of this line of research is to reflect on biological processes and conclude lower bounds on biological parameters. However, despite the generality of our model, more work must be done to obtain concrete biological conclusions. 
%In flocking or schooling processes, for example, aiming to converge on a desirable direction involves opinions (movement directions) that are taken from a continuous range \cite{couzin2005effective}. Moreover, updating rules are not restricted to the initial set of opinions (as assumed in our work), and instead, new opinions may be formed e.g., by averaging directions \cite{couzin2005effective,couzin2011uninformed,korman2022distributed,gelblum2015ant}. Even if the space of opinions is binary, as considered in the decision-making experiment on ants in \cite{ayalon2021sequential}, opinions may still form a larger set (such as the direction of attachment to the carried load in \cite{ayalon2021sequential}), and in order to disqualify this phenomenon, one must readjust the experimental setting. Moreover, in biology, what sometimes appears as ``neutral'' passive communication can, in fact,  encode additional useful information. For example, in the process of recruitment, the speed of ants has been shown to encode information about the certainty of their opinions \cite{razin2013desert,korman2014confidence}. Therefore, again, observing the behavior of individuals may provide access to more information than merely their opinions. 
Conducting an experiment that fully adheres to our model, or refining our results to apply to more realistic settings remains for future work. Candidate experimental settings that appear to be promising include fish schooling \cite{couzin2005effective,couzin2011uninformed}, collective sequential decision-making in ants \cite{ayalon2021sequential}, and recruitment in ants \cite{razin2013desert}. 
If successful, such an outcome would be highly pioneering from a methodological perspective.  Indeed, to the best of our knowledge, a concrete lower bound on a biological parameter that is achieved in an indirect manner via  mathematical considerations has never been obtained. 

\paragraph{\bf Acknowledgement.} The authors would like to thank Yoav Rodeh for very helpful discussions concerning the lower bound proof (Theorem~\ref{thm:lowbound}).

%\amos{Amos: I would like that we add acknowledgment to Yoav Rodeh for very helpful discussions concerning the lower bound proof.}
%\robin{I've already thought about it, but apparently this is not allowed in the submission. We have to wait until we write the camera-ready version.}
%Finally, from a technical perspective, it would be interesting to study the convergence time of the voter-model in general topologies.

%\andy{We should think whether to discuss the following:}

%\begin{itemize}
%    \item The extension of the voter-model analysis (i.e. the upper bound) to general topologies. I am aware that we don't have anything concrete now, but a referee whp will pose this question.
 %   \item possible experimental results on random graphs? the ijcai paper \cite{} which is very similar to us, did it....

 %   \item Analytical results for the setting with memory
    
%\end{itemize}

%\clearpage
\bibliographystyle{unsrt}
\bibliography{zeabib}

%\clearpage
%\appendix

%\input{trunks/Appendix.tex}
%\input{trunks/Simulations.tex}

\end{document}